\documentclass[a4paper, 12pt]{article}

\usepackage[sort&compress]{natbib}
\bibpunct{(}{)}{;}{a}{}{,} 

\usepackage{amsthm, amsmath, amssymb, mathrsfs, multirow, url, subfigure}
\usepackage{graphicx} 
\usepackage{ifthen} 
\usepackage{amsfonts}
\usepackage[usenames]{color}
\usepackage{fullpage}

%\RequirePackage[OT1]{fontenc} 
%\RequirePackage[colorlinks]{hyperref}
%\RequirePackage{hypernat}

%\numberwithin{equation}{section} 

\theoremstyle{plain} 
\newtheorem{thm}{Theorem}

\theoremstyle{definition}

\theoremstyle{remark} 
\newtheorem{remark}{Remark}

\newcommand{\prob}{\mathsf{P}}

\newcommand{\bel}{\mathsf{bel}}
\newcommand{\pl}{\mathsf{pl}}

\newcommand{\bin}{{\sf Bin}}
\newcommand{\unif}{{\sf Unif}}
\newcommand{\nm}{{\sf N}}

\newcommand{\gam}{{\sf Gamma}}
\newcommand{\stt}{{\sf t}}

\newcommand{\chisq}{{\sf ChiSq}}
\newcommand{\be}{{\sf Beta}}

\newcommand{\B}{\mathcal{B}}

\newcommand{\YY}{\mathbb{Y}}
\newcommand{\UU}{\mathbb{U}}

\newcommand{\WW}{\mathbb{W}}

\renewcommand{\S}{\mathcal{S}}

\renewcommand{\phi}{\varphi}

\newcommand{\stgeq}{\geq_{\text{st}}}

\title{Prior-free probabilistic prediction of future observations}
\author{
Ryan Martin \\
Department of Mathematics, Statistics, and Computer Science \\
University of Illinois at Chicago \\
\url{rgmartin@uic.edu} \\
\mbox{} \\
Rama T.~Lingham \\
Division of Statistics \\
Northern Illinois University \\
\url{rama@math.niu.edu} 
}
\date{\today}

\begin{document}

\maketitle 

\begin{abstract}
Prediction of future observations is a fundamental problem in statistics.  Here we present a general approach based on the recently developed inferential model (IM) framework.  We employ an IM-based technique to marginalize out the unknown parameters, yielding prior-free probabilistic prediction of future observables.  Verifiable sufficient conditions are given for validity of our IM for prediction, and a variety of examples demonstrate the proposed method's performance.  Thanks to its generality and ease of implementation, we expect that our IM-based method for prediction will be a useful tool for practitioners.  

\smallskip

\emph{Keywords and phrases:} Disease count data; environmental data; inferential model; plausibility; prediction interval; system breakdown data; validity.
%tolerance; 
\end{abstract}

\section{Introduction}
\label{S:intro}

The prediction of future observations based on the information available in a given sample is a fundamental problem in statistics.  For example, in engineering applications, such as computer networking, one might want to predict, to some degree of certainty, the time at which the current system might fail in order to have resources available to fix it.  Despite this being a fundamental problem, the available literature does not seem to give any clear guidelines about how to approach a prediction problem in general.  From a frequentist point of view, there are a host of techniques available for constructing prediction intervals in specific examples; see, for example, the book by \citet{hahn.meeker.1991} and papers by \citet{fertig.mann.1977}, \citet{bhaumik.gibbons.2004}, \citet{krishnamoorthy.etal.2008}, and \citet{wang2010}.  Some general approaches to the frequentist prediction problem are presented in \citet{beran1990} and \citet{lawless.fredette.2005}.  From a Bayesian point of view, if a prior distribution is available for the unknown parameter, then the prediction problem is conceptually straightforward.  The Bayesian model admits a joint distribution for the observed data and future data, so a conditional distribution for the latter given the former---the Bayesian predictive distribution---is the natural tool.  \citet{hamada.etal.2004} presents some applications of Bayesian prediction.  The catch is that often there is no clear choice of prior.  Default, or non-informative priors can be used but, in that case, it is not clear that the resulting inference will be meaningful in either a personal probabilistic or frequentist sense.  \citet{wang.hannig.iyer.2012} propose a fiducial approach for prediction which, at a high-level, can be viewed as a sort of compromise to the frequentist and Bayesian approaches.  They propose a very natural predictive distribution that obtains from the usual fiducial distribution for the parameter \citep{hannig2009, hannig2012}.  They show  that the prediction intervals obtained from the fiducial predictive distribution are asymptotically correct and perform well, in examples, compared to existing prediction intervals.   

The fiducial approach is attractive because no prior distributions are required.  However, like Bayesian posterior or predictive distributions based on default priors, fiducial distributions may not be calibrated for meaningful probabilistic inference, except possibly in the limit \citep{liu.martin.wire}.  Recently, \citet{imbasics} proposed a general framework for prior-free probabilistic inference, called \emph{inferential models} (IMs).  This framework has some parallels with fiducial \citep[e.g.,][]{fisher1959, hannig2009, hannig2012} and Dempster--Shafer theory \citep[e.g.,][]{dempster2008, shafer1976} in that work is carried out in terms of unobservable auxiliary variables.  There are also some connections with the frequentist confidence distributions \citep[e.g.,][]{xie.singh.2012}.  The key difference between IMs and these other frameworks is the way that the auxiliary variables are handled.  The important feature is that IMs provide probabilistic summaries of the information in data concerning the quantity of interest, and that these summaries are provably meaningful, not just in an asymptotic sense.  

In this paper, we provide a IM-based solution to the problem of predicting future observations.  The critical observation that drives the approach here is that predicting future observations is a marginal inference problem, one in which the full parameter itself is a nuisance parameter to be marginalized out.  With this view, in Section~\ref{SS:setup} we apply the general marginalization principles in \citet{immarg} to eliminate the nuisance parameter, directly providing a marginal IM for the future observations.  In Section~\ref{SS:pred.valid} we give general conditions under which the resulting IM for prediction is valid, and discuss under what circumstances these conditions hold, in what cases they can be weakened, and other consequences.  The key point is that the plausibility function obtained from a valid IM provides a probabilistic summary of the information in the observed data concerning the future data to be predicted; this function can be plotted to provide some visual summary.  Moreover, the validity theorem demonstrates that the predictive plausibility interval, defined in \eqref{eq:pred.region}, has the nominal frequentist coverage for all finite samples, not just in the limit.  Our focus here is on the case of predicting a univariate future observable, but the multivariate case, discussed briefly in Section~\ref{SS:multi}, requires some additional considerations.  Several practical examples of prediction in the IM context are worked out in Section~\ref{S:examples}.  These examples involve a variety of common models, and prediction problems in quality control, environmental, system breakdown, and disease count applications are considered.  To compare our IM-based solution to other existing methods, we focus on frequentist performance of our prediction intervals.  In all the examples we consider, the IM intervals are competitive with the existing methods.  R code for these examples is provided at \url{www.math.uic.edu/~rgmartin}.  The take-away message is that the IM approach provides an easily implementable and general method for constructing meaningful prior-free probabilistic summaries of the information in observed data  for inference or prediction; that these summaries can be converted to frequentist procedures with fixed-$n$ performance guarantees and comparable efficiencies compared to existing methods is an added bonus.

\section{Inferential models for prediction}
\label{S:prediction}

\subsection{Basic notation and terminology}

The basic IM framework is introduced in \citet{imbasics}, and further developments are presented in \citet{imcond, immarg}.  Here we want to briefly introduce the necessary notation and terminology.  Suppose that the goal is inference on an unknown parameter $\theta$.  \citet{imbasics} present a three-step IM construction: association (A), prediction (P), and combination (C) steps.  The starting point is identifying an association that links the data $Y$ and the parameter $\theta$ to an unobservable auxiliary variable $U$.  Often, a naive association will have auxiliary variables of higher dimension than the parameter, so special conditioning and/or marginalization techniques are needed to satisfactorily reduce the dimension of $U$.  In any case, once this ``baseline'' association is specified, the A-step of the IM construction is completed.  The P-step, unique to the IM approach, introduces a random set for predicting $U$.  Finally, the C-step combines the association with the predictive random set in a natural way, yielding a pair of belief and plausibility functions for probabilistic inference on $\theta$.  The aforementioned papers give a number of examples of this approach, along with further explanation and theory.

\subsection{Preview for prediction}
\label{SS:preview}

Before getting into the general details about the prediction problem, we present a relatively simple example as a preview of our proposed IM approach.  Consider a homogeneous Poisson process $\{N(t): t \geq 0\}$ with rate $\theta > 0$.  The arrival times $T_0,T_1,T_2,\ldots$ are such that $T_0 \equiv 0$ and the inter-arrival times $T_i-T_{i-1}$, $i \geq 1$, are independent exponential random variables with rate $\theta$.  

If the sampling scheme is to wait for the $n$-th arrival, then the sufficient statistic for $\theta$ in this model is $Y=T_n$, the last arrival time.  Based on the arguments in \citet{imcond}, the baseline association for $\theta$ is 
\[ Y = (1/\theta) G_n^{-1}(U), \quad U \sim \unif(0,1), \]
where $G_n$ is the ${\sf Gamma}(n,1)$ distribution function.  If inference on $\theta$ was the goal, then this would complete the A-step.  However, suppose the goal is to predict $\tilde Y = T_{n+k}$, the time of the $(n+k)$-th arrival, for some fixed integer $k \geq 1$.  Then $\theta$ itself is a nuisance parameter, and the quantity of interest is $\tilde Y$.  From the baseline association above, we can easily solve for $\theta$ in terms of $(Y,U)$, i.e.,  
\[ \theta(Y,U) = G_n^{-1}(U)/Y. \]
Since $\tilde Y=T_{n+k}$, for given $Y=T_n$, equals $Y$ plus an independent gamma random variable with shape $k$ and rate $\theta$, following \citet{immarg}, we have a marginal association for $\tilde Y$ given by 
\[ \tilde Y = Y + \frac{1}{\theta(Y,U)} G_k^{-1}(\tilde U) = Y \Bigl( 1 + \frac{G_k^{-1}(\tilde U)}{G_n^{-1}(U)} \Bigr). \]
This completes the A-step for prediction.  If $R$ denotes the ratio in the far right-hand side above, then $R$ has a generalized gamma ratio distribution \citep{coelho.mexia.2007} with density function $f(r) \propto (1 + r)^{-(n+k)}$, $r > 0$.  If $F$ is the corresponding distribution function, then we may rewrite the marginal association as 
\[ \tilde Y = Y \{ 1 + F^{-1}(W) \}, \quad W \sim \unif(0,1). \]
Thus, we have successfully marginalized out the unknown parameter, directly associating the quantity to be predicted, $\tilde Y$, to the observed data, $Y$, and an auxiliary variable, $W$.  Then, the general IM principles \citep{imbasics, imcond, immarg} can be applied directly.  In particular, we apply the P- and C-steps to the association for $\tilde Y$, resulting in prior-free probabilistic prediction of the future arrival time.  The next two subsections will describe the proposed approach in more detail, and our examples in Section~\ref{S:examples} will demonstrate its generality, its quality performance, and its simplicity in applications.

\subsection{General setup and the A-step}
\label{SS:setup}

In the prediction problem, there is observed data $Y$ and future data $\tilde Y$ to be predicted; the two are linked together through a common parameter $\theta$.  Here we assume that $\tilde Y$ is a scalar, though it could be a function of several future observations; see Section~\ref{SS:multi} for discussion on the multivariate prediction problem.  Write the sampling model $\prob_{Y|\theta}$ for $Y$ in association form:
\begin{equation}
\label{eq:amodel}
Y=a(\theta,U), \quad U \sim \prob_U, 
\end{equation}
where $\prob_U$ is known and free of $\theta$.  We call this the ``baseline'' association, and it connects observable data $Y$ and unknown parameter $\theta$ to an unobservable auxiliary variable $U$.  Despite its simple form, the baseline association is quite general, i.e., it covers cases outside the structural models in \citet{fraser1968}; see Sections~\ref{SS:gamma}--\ref{SS:binomial}.  For example, for any iid model with a smooth distribution function $F_\theta$, take the $i$-th component of $a(\theta,U)$ to be $F_\theta^{-1}(U_i)$ for $U_i \sim \unif(0,1)$.  Intuitively, any model that can be simulated has a form \eqref{eq:amodel}.  

As our first step, assume that this baseline association can be re-expressed as
\[ T(Y) = b(\theta, \tau(U)) \quad \text{and} \quad H(Y) = \eta(U), \]
for functions $(T,H)$ and $(\tau,\eta)$ such that $y \mapsto (T(y),H(y))$ and $u \mapsto (\tau(u),\eta(u))$ are one-to-one.  A key feature of this decomposition is that a solution $\theta=\theta(y,v)$ of the equation $T(y)=b(\theta,v)$ is available for all $(y,v)$.  By conditioning on the observed value, $H(Y)$, of $\eta(U)$, this association can then be reduced as follows:
\[ T(Y) = b(\theta, V), \quad V \equiv \tau(U) \sim \prob_{\tau(U)|\eta(U)=H(Y)}. \]
\citet{imcond} show that such a decomposition exists in broad generality.  For simplicity, we assume here that $\tau(U)$ and $\eta(U)$ are independent, so the conditioning can be dropped, i.e., $\prob_{\tau(U)|\eta(U)=H(Y)} \equiv \prob_{\tau(U)}$.  This assumption holds for many problems, including those in Section~\ref{S:examples}.  Dependence in this context is only a technical complication, not conceptual, so we focus here on the simpler case of independent $\tau(U)$ and $\eta(U)$; the dependent case is discussed further in Section~\ref{SS:conditional}.  

For the observed data $Y$ and the future data $\tilde Y$, write a joint association:
\[ T(Y)=b(\theta,V) \quad \text{and} \quad \tilde Y = \tilde a(\theta, \tilde U), \]
where $(V,\tilde U) \sim \prob_{(V,\tilde U)}$.  When $Y$ and $\tilde Y$ are independent, $V$ and $\tilde U$ are likewise independent, but in time series problems, for example, the auxiliary variables will be correlated.  The use of ``$\tilde a$'' for the mapping instead of simply ``$a$'' is to cover the case where $Y$ and $\tilde Y$ are related through a common parameter $\theta$, but possibly have different distributions.  For example, $Y$ might be an iid normal sample, while $\tilde Y$ is the maximum of ten future normal samples; similarly, in a regression context, $Y$ and $\tilde Y$ might have different values of the predictor variables.  

Solving for $\theta$ in the first equation and plugging in to the second gives 
\[ T(Y)=b(\theta,V) \quad \text{and} \quad \tilde Y = \tilde a\bigl(\theta(Y,V), \tilde U\bigr), \]
Since prediction is a marginal inference problem, where $\theta$ itself is the nuisance parameter, it follows from the general theory in \citet{immarg} that the first equation in the above display can be ignored.  This leaves a marginal association for $\tilde Y$:
\begin{equation}
\label{eq:pred.marg0}
\tilde Y = \tilde a\bigl( \theta(Y,V), \tilde U \bigr). 
\end{equation}
This marginalization has some similarities to the Bayesian and fiducial predictive distributions.  That is, the model for $\tilde Y$ in \eqref{eq:pred.marg0} is that of a mixture of the distribution of $\theta(Y,V)$, for fixed $Y$, with the distribution of $\tilde a(\theta,\tilde U)$ for fixed $\theta$.  This, of course, is not the ``true'' distribution of $\tilde Y$ given $Y$; the idea is that the future observable $\tilde Y$ is being modeled as a $Y$-dependent function of $(V,\tilde U)$.  We claim that equation \eqref{eq:pred.marg0} describes a sort of predictive distribution of $\tilde Y$ for a given $Y$, similar to the frequentist predictive distributions in, e.g., \citet{lawless.fredette.2005}.  To see this better, let $G_Y$ be the distribution of the right-hand side of \eqref{eq:pred.marg0} as a function of $(V,\tilde U)$ for fixed $Y$.  Then, in the case this is an absolutely continuous distribution, \eqref{eq:pred.marg0} can be rewritten as 
\begin{equation}
\label{eq:pred.marg}
\tilde Y = G_Y^{-1}(W), \quad W \sim \unif(0,1), 
\end{equation}
so $G_Y$ plays the role of a predictive distribution for $\tilde Y$.  This completes the A-step in the construction of the IM for prediction.  That is, \eqref{eq:pred.marg} is the association that links the observable data $Y$, the unobservable auxiliary variable $(U,\tilde U)$, and the future data $\tilde Y$.  

Though \eqref{eq:pred.marg0} has some connection to Bayesian and fiducial prediction, it differs from a plug-in or parametric bootstrap prediction.  The difference is that the quantity $\theta(Y,V)$ plugged in is not fixed.  That is, we consider the distribution of $\tilde a(\theta(Y,V),\tilde U)$ as a function of $(V,\tilde U)$, not the distribution of $\tilde a(\hat\theta_Y, \tilde U)$, as a function of $\tilde U$, for fixed $\hat\theta_Y$.

\subsection{P- and C-steps}
\label{SS:pc.steps}

After the A-step in \eqref{eq:pred.marg}, the P-step requires specification of a suitable predictive random set $\S \sim \prob_\S$ for $W$.  A rigorous presentation on the theory of random sets is given in \citet{molchanov2005}, including a general definition.  For our purposes here, it suffices to define a random set by first specifying a probability space $(\WW, \B_\WW, \prob_W)$ and a map $S: \WW \to \B_\WW$ which is measurable in the sense that $\{w: S(w) \cap K \neq \varnothing\} \in \B_\WW$ for all compact $K \subseteq \WW$.  Then $\S=S(W)$, for $W \sim \prob_W$ is a random set, and its distribution $\prob_\S$ is the push-forward measure $\prob_W S^{-1}$.  \citet{imbasics} argue that the choice of predictive random set ought to depend on the assertion $A$ of interest.  There are three kinds of assertions about $\tilde Y$ that will be of interest here in the prediction problem: two one-sided assertions, and a singleton assertion.  Given a predictive random set and an assertion of interest, the C-step proceeds by combining the A- and P-step results.  \citet{imbasics} give a general explanation, but here this amounts to computing the \emph{plausibility of $A$}, i.e.,  
\[ \pl_Y(A) = \prob_\S\{G_Y^{-1}(\S) \cap A \neq \varnothing\}. \]
Next we discuss, in turn, the P- and C-steps for each of these kinds of assertions.   
\begin{itemize}
\item {\it Right-sided.} A right-sided assertion is of the form $A=\{\tilde Y > \tilde y\}$ for a fixed $\tilde y$.  For this assertion, by Theorem~4 in \citet{imbasics}, the optimal predictive random set is one-sided: $\S=[0,W]$ for $W \sim \unif(0,1)$.  In this case, the C-step gives the plausibility function 
\begin{equation}
\label{eq:pl.rtside}
 \pl_Y(A) = \prob_\S\{G_Y^{-1}(\S) \cap A \neq \varnothing\} = 1-G_Y(\tilde y)
\end{equation}
The plausibility function is a non-increasing function of $\tilde y$; see Figure~\ref{fig:bhaumik}(a) described in Section~\ref{SS:log.normal}.  Hence the prediction region \eqref{eq:pred.region} based on the plausibility function in \eqref{eq:pl.rtside} will be an upper prediction bound for $\tilde Y$. 
%\vspace{-2mm}
\item {\it Left-sided.} A left-sided assertion is of the form $A=\{\tilde Y \leq \tilde y\}$ for a fixed $\tilde y$.  Similar to the right-sided case, the optimal predictive random set is $\S=[W,1]$ for $W \sim \unif(0,1)$.  Then the C-step gives the plausibility function 
\begin{equation}
\label{eq:pl.ltside}
 \pl_Y(A) = \prob_\S\{G_Y^{-1}(\S) \cap A \neq \varnothing\} = G_Y(\tilde y)
\end{equation}
The plausibility function is an non-decreasing function of $\tilde y$; see Figure~\ref{fig:hamada}(a).  Hence the prediction region \eqref{eq:pred.region} based on the plausibility function in \eqref{eq:pl.ltside} will be a lower prediction bound for $\tilde Y$. 
%\vspace{-2mm}
\item {\it Singleton.} A singleton assertion is of the form $A=\{\tilde Y = \tilde y\}$ for a fixed $\tilde y$.  The optimal predictive random set worked out in \citet{imbasics} for this assertion is complicated, but a natural choice that is suitable in most cases (and optimal in some cases) is the ``default'' predictive random set $\S=\{w: |w-0.5| \leq |W-0.5|\}$, for $W \sim \unif(0,1)$.  Then the C-step gives the plausibility function 
\begin{equation}
\label{eq:pl.2side}
 \pl_Y(A) = \prob_\S\{G_Y^{-1}(\S) \cap A \neq \varnothing\} = 1-|2G_Y(\tilde y)-1|
\end{equation} 
The prediction region \eqref{eq:pred.region} based on the plausibility function in \eqref{eq:pl.2side} will be a two-sided prediction bound for $\tilde Y$.  
\end{itemize}

It is important to note that, although the general P- and C-steps may appear rather technical, implementation of the IM approach for prediction requires only that one be able to evaluate, either analytically or numerically, the distribution function $G_Y$.  Section~\ref{S:examples} gives several examples and applications to demonstrate that our IM-based plausibility intervals are good general tools for the prediction problem, and that such intervals are often better than what other methods provide.

\subsection{Prediction validity}
\label{SS:pred.valid}

Here we give the main distributional property of the plausibility function for prediction.  The key requirement is a mild condition on the predictive random set $\S$.  Following \citet{randset}, define the contour function $f_\S(w) = \prob_\S(\S \ni w)$.  Then the predictive random set $\S$ is \emph{valid} if 
\begin{equation}
\label{eq:prs.valid}
f_\S(W) \stgeq \unif(0,1) \quad \text{when $W \sim \prob_W$}, 
\end{equation}
where $\stgeq$ means ``stochastically no smaller than.'' \citet{imbasics} demonstrate that this is a very mild condition.  (Though not required for the theorem, they also recommend to consider only predictive random sets with nested support.  Those discussed in the previous section are all nested.)  The three assertions $A$ described in Section~\ref{SS:pc.steps} depend on a generic $\tilde y$.  Here we write $\pl_Y(\tilde y)$ for the plausibility function for such an assertion; the specific kind of assertion will be clear from the context.  

\begin{thm}
\label{thm:pred.valid}
For the marginal association \eqref{eq:pred.marg} for $\tilde Y$, let $\S \sim \prob_\S$ be a valid predictive random set for $W \sim \unif(0,1)$, i.e., \eqref{eq:prs.valid} holds, which is non-empty with $\prob_\S$-probability~1.  If $G_Y(\tilde Y) \sim \unif(0,1)$ for $(Y,\tilde Y) \sim \prob_{(Y,\tilde Y)|\theta}$ for all $\theta$, then  
\[ \sup_\theta \prob_{(Y,\tilde Y)|\theta}\{\pl_Y(\tilde Y) \leq \alpha\} \leq \alpha, \quad \forall \; \alpha \in (0,1). \]
This holds whether $\pl_Y(\tilde Y)$ is based on right-sided, left-sided, or singleton assertions.
\end{thm}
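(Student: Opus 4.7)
The plan is to reduce each of the three validity bounds to the standard probability integral transform, using the hypothesis $G_Y(\tilde Y) \sim \unif(0,1)$ together with the validity inequality $f_\S(W) \stgeq \unif(0,1)$ guaranteed by \eqref{eq:prs.valid}.

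I would treat the singleton assertion first, since it is the case where validity of $\S$ is genuinely needed. For $A = \{\tilde y\}$, monotonicity of $G_Y$ gives
\[
\pl_Y(\tilde y) = \prob_\S\{G_Y^{-1}(\S) \cap \{\tilde y\} \neq \varnothing\} = \prob_\S\{\S \ni G_Y(\tilde y)\} = f_\S(G_Y(\tilde y)).
\]
Writing $W = G_Y(\tilde Y)$, the hypothesis yields $W \sim \unif(0,1)$ under $\prob_{(Y,\tilde Y)|\theta}$ for every $\theta$, so \eqref{eq:prs.valid} supplies $\prob_{(Y,\tilde Y)|\theta}\{\pl_Y(\tilde Y) \leq \alpha\} = \prob\{f_\S(W) \leq \alpha\} \leq \alpha$, uniformly in $\theta$; taking the supremum preserves the bound.

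For the two one-sided assertions with the corresponding one-sided predictive random sets, the plausibility formulas \eqref{eq:pl.rtside} and \eqref{eq:pl.ltside} already compute to $\pl_Y(\tilde Y) = 1 - G_Y(\tilde Y)$ and $\pl_Y(\tilde Y) = G_Y(\tilde Y)$, each of which is exactly $\unif(0,1)$ by hypothesis. The default singleton random set similarly yields, via \eqref{eq:pl.2side}, $\pl_Y(\tilde Y) = 1 - |2 G_Y(\tilde Y) - 1|$, and since $1 - |2U - 1|$ is $\unif(0,1)$ whenever $U$ is, the bound again holds—indeed with equality in all three of these cases, so the $\stgeq$-slack in validity is not even invoked here.

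The only subtlety worth checking is the identification $G_Y^{-1}(\S) \ni \tilde y \iff \S \ni G_Y(\tilde y)$ used in the singleton step, which is immediate when $G_Y$ is continuous and strictly increasing, but requires minor bookkeeping when $\tilde Y$ is discrete or $G_Y$ has flat regions (as in the Poisson examples to come). I expect this measurability/continuity handling of $G_Y^{-1}$ to be the main technical obstacle; however, any such irregularity can only shrink the pre-image $G_Y^{-1}(\S)$ and so can only tighten, not reverse, the stated inequality, making the difficulty cosmetic rather than conceptual.
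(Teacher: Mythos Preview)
Your proposal is correct and follows the same route as the paper's proof: the key identity $\pl_y(\tilde y) = f_\S(G_y(\tilde y))$, combined with the assumption $G_Y(\tilde Y)\sim\unif(0,1)$ and the validity condition \eqref{eq:prs.valid}, immediately yields the bound. The paper's proof is a one-line version of your first paragraph; your additional explicit treatment of the one-sided and default cases and the remark about discrete $G_Y$ are sound elaborations but not part of the paper's argument.
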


\begin{proof}
Since $\pl_y(\tilde y) = f_\S(G_y(\tilde y))$, the result follows from the assumed validity of $\S$ and the assumption that $G_Y(\tilde Y) \sim \unif(0,1)$ as a function of $(Y,\tilde Y)$.  
\end{proof}

The following sequence of remarks discusses the assumptions, interpretations, and various extensions of Theorem~\ref{thm:pred.valid}.  See, also, Section~\ref{S:further.details}.  

\begin{remark}
\label{re:interpretation}
\citet{imbasics} argue that validity gives the plausibility function a scale on which the numerical values can be interpreted.  For example, like in the familiar case of p-values, if the plausibility function is small, e.g., $\pl_y(\tilde y) < 0.05$, then, for the given $Y=y$, the value $\tilde y$ is not a plausible prediction; see, also, Remark~\ref{re:pred.region}.  
\end{remark}

\begin{remark}
\label{re:pred.region}
A consequence of Theorem~\ref{thm:pred.valid} is that the set 
\begin{equation}
\label{eq:pred.region}
\{\tilde y: \pl_y(\tilde y) > \alpha\}
\end{equation}
is a $100(1-\alpha)$\% prediction plausibility region, i.e., the probability that $\tilde Y$ falls inside the region \eqref{eq:pred.region} is at least $1-\alpha$ under the joint distribution of $(Y,\tilde Y)$ for any parameter value $\theta$.  Then, for the three kinds of assertions, namely, right, left, and singleton, discussed in Section~\ref{SS:pc.steps}, one gets $100(1-\alpha)$\% upper, lower, and two-sided prediction intervals, respectively.  Moreover, the region \eqref{eq:pred.region} has the following desirable interpretation: each point $\tilde y$ it contains is individually sufficiently plausible.  No frequentist, Bayes, or fiducial prediction interval assigns such a meaning to the individual elements it contains.
\end{remark}

\begin{remark}
\label{re:efficiency}
Suppose that $\S$ is such that $f_\S(V) \sim \unif(0,1)$ for $V \sim \unif(0,1)$.  Then $\pl_Y(\tilde Y) \sim \unif(0,1)$ as a function of $(Y,\tilde Y) \sim \prob_{(Y,\tilde Y)|\theta}$ for all $\theta$.  These conditions hold in many examples (see Section~\ref{S:examples}) and they imply that the plausibility region in \eqref{eq:pred.region} has exact prediction coverage, $1-\alpha$, not just conservative.  
\end{remark}

\begin{remark}
\label{re:separable}
A natural question is: under what conditions does  $G_Y(\tilde Y) \sim \unif(0,1)$ hold?  An important example is the case we shall call ``separable,''  where the effect of $Y$ on the right-hand side of \eqref{eq:pred.marg0} can be separated from the auxiliary variables, i.e., \eqref{eq:pred.marg0} can be rewritten as $p(Y,\tilde Y) = \phi(V, \tilde U)$ for some functions $p$ and $\phi$.  In the language of \citet{lawless.fredette.2005}, the quantity $p(Y,\tilde Y)$ is an exact pivot.  Many problems with a group transformation structure \citep[e.g.][]{eaton1989} are separable, and are covered by Theorem~\ref{thm:pred.valid}.  Some of the examples in Section~\ref{S:examples} are of this type, but the numerical results even for the non-separable models (see Sections~\ref{SS:gamma}--\ref{SS:binomial}) suggest that the validity result holds broadly.  Section~\ref{SS:asymptotics} has more discussion on the non-separable case.  
\end{remark}

\begin{remark}
\label{re:fiducial}
An advantage of the IM's handling of the auxiliary variables, revealed in the previous remarks, is that one has finite-sample control on the prediction coverage.  The fiducial approach to prediction, on the other hand, can only guarantee asymptotic control of frequentist prediction coverage \citep[][Theorem~1]{wang.hannig.iyer.2012}.    
\end{remark}

%\begin{remark}
%\label{re:asymptotics}
%Outside the separable class, or in cases where $\tilde Y$ is a non-linear function of several future observables, the theory of prediction validity is more challenging.  However, our examples in Section~\ref{S:examples} demonstrate that the uniformity assumption holds at least approximately.  To justify this claim, we write $Y^n=(Y_1,\ldots,Y_n)$ for data consisting of $n$ iid components.  Suppose that the solution $\theta(Y^n,V)$ is a consistent estimator of $\theta$, as a function of $(Y^n,V)$; this usually is easy to arrange, see the examples  in Section~\ref{S:examples}.  If $\tilde Y$ has a continuous distribution, then, without loss of generality, we can write $G_{Y^n}(\tilde Y) = \tilde F_{\theta(Y^n,V)}(\tilde Y)$, where $\tilde F_\theta$ is the true distribution of $\tilde Y$.  Trivially, we have $(\theta(Y^n,V),\tilde Y) \to (\theta, \tilde Y)$ in distribution so, if $(\theta, \tilde y) \mapsto \tilde F_\theta(\tilde y)$ is continuous, then the continuous mapping theorem implies that 
%\[ G_{Y^n}(\tilde Y) = \tilde F_{\theta(Y^n,V)}(\tilde Y) \to \tilde F_\theta(\tilde Y) \sim \unif(0,1) \quad \text{in distribution}. \]
%Therefore, we can generally be sure that the distribution of $G_{Y^n}(\tilde Y)$ will be approximately $\unif(0,1)$ when the sample size $n$ is large.  This argument holds even if $\tilde Y$ is some scalar function of several future observations.   
%\end{remark}

\begin{remark}
\label{re:stochastic.order}
The uniformity condition in Theorem~\ref{thm:pred.valid} can be relaxed to a stochastic ordering condition, but then the conclusion holds only for certain predictive random sets and certain assertions.  For example, suppose $G_Y(\tilde Y)$ is stochastically no smaller than $\unif(0,1)$.  Then the conclusion of Theorem~\ref{thm:pred.valid} holds for the one-sided predictive random set $\S=[0,W]$, $W \sim \unif(0,1)$.  In this case, by taking assertions $A=\{\tilde Y > \tilde y\}$, the lower plausibility bounds obtained via \eqref{eq:pred.region} have the nominal frequentist coverage probability as described in Remark~\ref{re:pred.region}.  Similar conclusions hold if $G_Y(\tilde Y)$ is stochastically no larger than $\unif(0,1)$, with obvious changes to the predictive random set and assertion. 
\end{remark}

\section{Examples and applications}
\label{S:examples}

\subsection{Normal models and a quality control application}
\label{SS:normal}

Let $Y=(Y_1,\ldots,Y_n)$ be an iid sample from a $\nm(\mu,\sigma^2)$ population, where $\theta=(\mu,\sigma)$ is unknown.  Our first goal is to predict the next independent observation $\tilde Y = Y_{n+1}$.  To start, consider the baseline association involving the original data
\[ Y_i = \mu + \sigma Z_i, \quad i=1,\ldots,n, \]
where $Z_1,\ldots,Z_n$ are iid $\nm(0,1)$.  Based on the arguments in \citet{imcond}, a conditional IM for $\theta=(\mu,\sigma)$ has association 
\[ \bar Y = \mu + \sigma n^{-1/2} U_1 \quad \text{and} \quad S = \sigma U_2, \]
where $\bar Y=n^{-1}\sum_{i=1}^n Y_i$ is the sample mean, $S^2 = (n-1)^{-1}\sum_{i=1}^n (Y_i-\bar Y)^2$ is the sample variance, $U_1 \sim \nm(0,1)$, and $(n-1)U_2^2 \sim \chisq(n-1)$, with $U_1$ and $U_2$ independent.  Then it is easy to see that 
\[ \theta(Y,U) = \bigl( \mu(Y,U), \sigma(Y,U) \bigr) = \Bigl(\bar Y - \frac{S}{n^{1/2}} \frac{U_1}{U_2}\,,\, \frac{S}{U_2} \Bigr). \] 
For the next observation $\tilde Y = Y_{n+1}$, the association is just like the baseline association above, i.e., $\tilde Y = \mu + \sigma \tilde U$, where $\tilde U$ is independent of $(U_1,U_2)$.  As discussed above, we can insert $\theta(Y,U)$ in place of $\theta$ in this association to get a marginal association for $\tilde Y$: 
\begin{equation}
\label{eq:normal.association}
\tilde Y = \bar Y - \frac{S}{n^{1/2}}\frac{U_1}{U_2} + \frac{S}{U_2} \tilde U = \bar Y + S \Bigl( \frac{1}{n^{1/2}} \frac{U_1}{U_2} - \frac{\tilde U}{U_2} \Bigr). 
\end{equation}
This is clearly one of those separable cases as described in Remark~\ref{re:separable}.  Also,
\[ V = \frac{1}{n^{1/2}} \frac{U_1}{U_2} - \frac{\tilde U}{U_2} \]
is distributed as $(n^{-1}+1)^{1/2} \stt(n-1)$, with distribution function $F_n$.  Then the marginal association \eqref{eq:normal.association} can be written as $\tilde Y = \bar Y + S F_n^{-1}(W)$, with $W \sim \unif(0,1)$.  If we are interested in a two-sided prediction interval, then, as in Section~\ref{SS:pc.steps}, we take a singleton assertion $A=\{\tilde Y = \tilde y\}$ and get the following plausibility function:
\[ \pl_Y(\tilde y) = 1-\Bigl|2F_n\Bigl( \frac{\tilde y - \bar Y}{S} \Bigr) - 1 \Bigr|. \]
Then the corresponding two-sided $100(1-\alpha)$\% plausibility interval \eqref{eq:pred.region} for $\tilde Y$ is 
\[ \bar Y \pm t_{n-1, 1-\alpha/2}^\star S (1 + n^{-1})^{1/2}, \]
where $t_{\nu,p}^\star$ is the $100p$th percentile of the t-distribution with $\nu$ degrees of freedom.  This is exactly the classical Student-t prediction interval discussed in, e.g.,  \citet{geisser1993}.  

The ideas just discussed extend quite naturally to the case of normal linear regression.  The details of the IM calculations would be similar to those presented in \citet{wang.hannig.iyer.2012} for the fiducial case and, hence, omitted here.

As a more sophisticated example, \citet{odeh1990} gives a quality control application involving sprinkler systems for fire prevention in a hotel.  In this application, based on a sample of $n=20$ sprinklers, whose activation temperatures are normally distributed, the goal is to give a two-sided prediction interval for the temperature at which at least $k=36$ of $m=40$ new sprinklers will activate.  In other words, the goal is to predict the temperature at which at least $k$ of the $m$ new sprinklers will activate.  The IM methodology can be used for this problem.  Let $\tilde Y$ be the $k$-th largest of $m$ future independent normal observations $Y_{n+1},\ldots,Y_{n+m}$.  The corresponding association for $\tilde Y$ is 
\[ \tilde Y = \mu + \sigma \tilde U, \quad \text{where} \quad \tilde U = \text{$k$-th largest of $U_{n+1},\ldots,U_{n+m}$}, \]
and $U_{n+1},\ldots,U_{n+m}$ are iid $\nm(0,1)$.  Then the marginal association for $\tilde Y$ can be written exactly as in \eqref{eq:normal.association} and the problem is still separable.  The only difference here is that $V = (n^{-1/2} U_1 - \tilde U)/U_2$ has a non-standard distribution.  As before, write $\tilde Y = \bar Y + S F_{n,m,k}^{-1}(W)$, where $F_{n,m,k}$ is the distribution function of $V$, and $W \sim \unif(0,1)$.  The distribution $F_{n,m,k}$ can be simulated and, therefore, one can easily get a Monte Carlo approximation of the plausibility function \eqref{eq:pl.2side} for $\tilde Y$ and, in turn, a two-sided prediction interval.  The IM prediction interval for $\tilde Y$ in this normal prediction problem is the same as the fiducial interval in \citet{wang.hannig.iyer.2012} and the interval in \citet{fertig.mann.1977}.  

%{\color{red} Quick example...?}

\subsection{Log-normal models and an environmental application}
\label{SS:log.normal}

Let $Y=(Y_1,\ldots,Y_n)$ be an iid sample from a log-normal population, with unkown parameter $\theta=(\mu,\sigma)$.  Log-normal models are frequently used in environmental statistics \citep{ott1995}.  In this case, $X=(X_1,\ldots,X_n)$, with $X_i = \log(Y_i)$, will be an iid $\nm(\mu,\sigma^2)$ sample, and the prediction problem can proceed as in Section~\ref{SS:normal} above.  In particular, predicting the next observation $\tilde Y = Y_{n+1}$ is straightforward, so we focus here on something more challenging.  Consider, as in \citet{bhaumik.gibbons.2004}, the problem of finding the upper prediction limit for the arithmetic mean of $m$ future log-normal observations, i.e., $\tilde Y = m^{-1} \sum_{j=1}^m Y_{n+j}$.  Working on the log-scale, with the $X_i$'s, we can first reduce dimension according to sufficiency and then solve for $\theta$ as follows:
\[ \theta(Y,U) = \bigl( \mu(Y,U), \sigma(Y,U) \bigr) = \Bigl( \bar X - \frac{S}{n^{1/2}} \frac{U_1}{U_2}, \, \frac{S}{U_2} \Bigr), \]
where $\bar X = n^{-1} \sum_{i=1}^n X_i$, $S^2 = (n-1)^{-1} \sum_{i=1}^n (X_i-\bar X)^2$, $U_1 \sim \nm(0,1)$, and $(n-1) U_2^2 \sim \chisq(n-1)$, with $U_1,U_2$ being independent.  The marginal association for $\tilde Y$, the arithmetic mean of $m$ future log-normal observations, is 
\[ \tilde Y = \frac1m \sum_{j=1}^m e^{\log Y_{n+j}} = \frac1m \sum_{j=1}^m \exp\Bigl\{ \Bigl( \bar X - \frac{S}{n^{1/2}} \frac{U_1}{U_2} \Bigr) +  \Bigl( \frac{S}{U_2}  \tilde U_{n+j} \Bigr) \Bigr\}, \]
where $\tilde U = (\tilde U_{n+1},\ldots,\tilde U_{n+m})$ are iid $\nm(0,1)$, independent of $U_1$ and $U_2$.  Here we use this association for $\tilde Y$ to construct an upper plausibility prediction limit.  

The above association is not of the separable form in Remark~\ref{re:separable}.  However, for a given $Y$, if $G_Y$ is the distribution of the right-hand side in the previous display, then the marginal association for $\tilde Y$ can be written in the form $\tilde Y = G_Y^{-1}(W)$, for $W \sim \unif(0,1)$, just like in \eqref{eq:pred.marg}.  This completes the A-step.  Since we seek to determine an upper prediction limit, the plausibility function for $\tilde Y$ is given by \eqref{eq:pl.rtside}; see Section~\ref{SS:pc.steps}.

For illustration, we consider an environmental study presented in \citet{bhaumik.gibbons.2004} concerning lead concentration in soil.  It is a ``brownfield'' investigation in which a now-closed plating facility was being investigated for future industrial use.  In April 1996, $m = 5$ soil borings were installed to delineate the extent of lead-impacted soil at the portion of the facility that may have been used for plating. An important environmental question, which \citet{bhaumik.gibbons.2004} addressed using frequentist prediction methods, is to determine whether the on-site mean lead concentration at this area of the facility exceeded background. To facilitate this determination, $n = 15$ off-site soil samples were collected in areas that were uninfluenced by the activities at the facility.  The data are reproduced in Table~\ref{table:bhaumik}. Using the Shapiro-Wilk normality test,  \citet{bhaumik.gibbons.2004} ascertained that, at the 5\% significance level, a log-normal model provides adequate fit to this data. Our main goal in this application is therefore to demonstrate that the on-site concentrations, on average, do not significantly exceed the backgrund. To this end,  we will use the IM framework discussed above to produce an upper prediction limit for the arithmetic mean of lead contents, $\tilde Y$, of $m=5$ on-site soil samples based on the $n=15$ off-site soil samples, $Y$, and then we will compare it to the arithmetic mean of the data collected on the on-site lead concentration.  

\begin{table}
\begin{center}
\begin{tabular}{cccccccccccccccc}
\hline
Off-site & 26 & 63 & 3 & 70 & 16 & 5 & 1 & 57 & 5 & 3 & 24 & 2 & 1 & 48 & 3 \\ 
On-site & 50 & 82 & 95 & 103 & 88 & & & & & & & & & & \\
\hline
\end{tabular}
\end{center}
\caption{Lead (mg/kg) for soil boring samples in off-site and on-site locations.}
\label{table:bhaumik}
\end{table}

The plausibility for $\tilde Y$, for right-sided assertions $A=\{\tilde Y > \tilde y\}$, as a function of $\tilde y$, is shown in Figure~\ref{fig:bhaumik}(a).  Those $\tilde y$ values with plausibility function exceeding 0.05 provide an upper prediction bound for $\tilde Y$ which, in this case, is 136.16 mg/kg.  For comparison, \citet{bhaumik.gibbons.2004} provide the bound 152.26 mg/kg based on their Gram--Charlier approximation, and \citet{kim2007} provides the bound 139.30 mg/kg based on a Bayesian approach.  All three prediction bounds contain the realized arithmetic mean of the on-site data in  Table~\ref{table:bhaumik}, which was 83.6 mg/kg. We therefore conclude that the on-site concentrations do not significantly exceed the backgrund. However, since smaller upper prediction limits are more precise, our IM-based bound is preferred.  An additional advantage of our IM-bound is that it, per Remark~\ref{re:pred.region}, also has a clearer interpretation than the above Bayesian and frequentist bounds.

To check the prediction performance for settings similar to the soil example, we take 5000 samples of size $n=15$ from a log-normal distribution with $\mu=2.173$ and $\sigma^2=2.3808$, the maximum likelihood estimates based on the off-site data in Table~\ref{table:bhaumik}.  A Monte Carlo estimate of the distribution function of $G_Y(\tilde Y)$ is shown in Figure~\ref{fig:bhaumik}(b).  Apparently, $G_Y(\tilde Y)$ is $\unif(0,1)$, so the plausibility function for prediction is valid, by Theorem~\ref{thm:pred.valid}.

For further comparison, we performed a simulation study similar to the one presented in \citet{bhaumik.gibbons.2004}.  We considered three values for $\mu$ (2, 3, 10), six values for $\sigma^2$ (0.0625, 0.2, 0.5, 1, 2, 10), five values for $n$ (5, 10, 20, 30, 100), and three values for $m$ (1, 5, 10).  For each combination, we evaluated the coverage probability of both the lower and upper 90\% prediction intervals.  In all cases, the coverage probability equals the nominal level, up to Monte Carlo error; these estimates are based on 10,000 Monte Carlo samples.  Unlike the Gram--Charlier approximation method  in \citet{bhaumik.gibbons.2004}, our IM-based interval method does not need technical tools for derivation, and achieves the nominal coverage probability even when $\sigma^2 > 3$.  Moreover, the other two frequentist approximation methods reported in \citet{bhaumik.gibbons.2004} do not achieve the nominal coverage probability.   

\begin{figure}
\begin{center}
\subfigure[Plausibility function of $\tilde Y$]{\scalebox{0.55}{\includegraphics{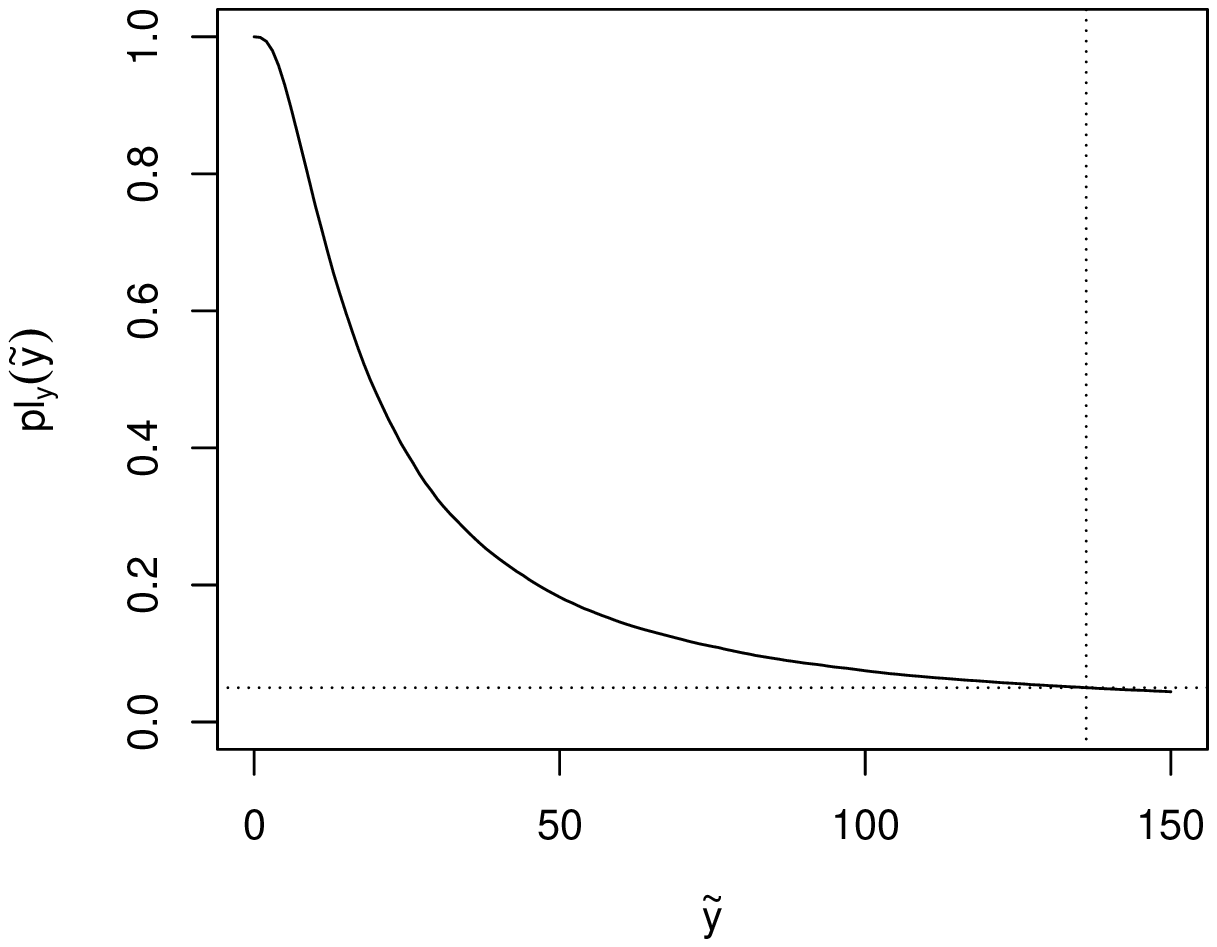}}}
\subfigure[Distribution function of $G_Y(\tilde Y)$]{\scalebox{0.55}{\includegraphics{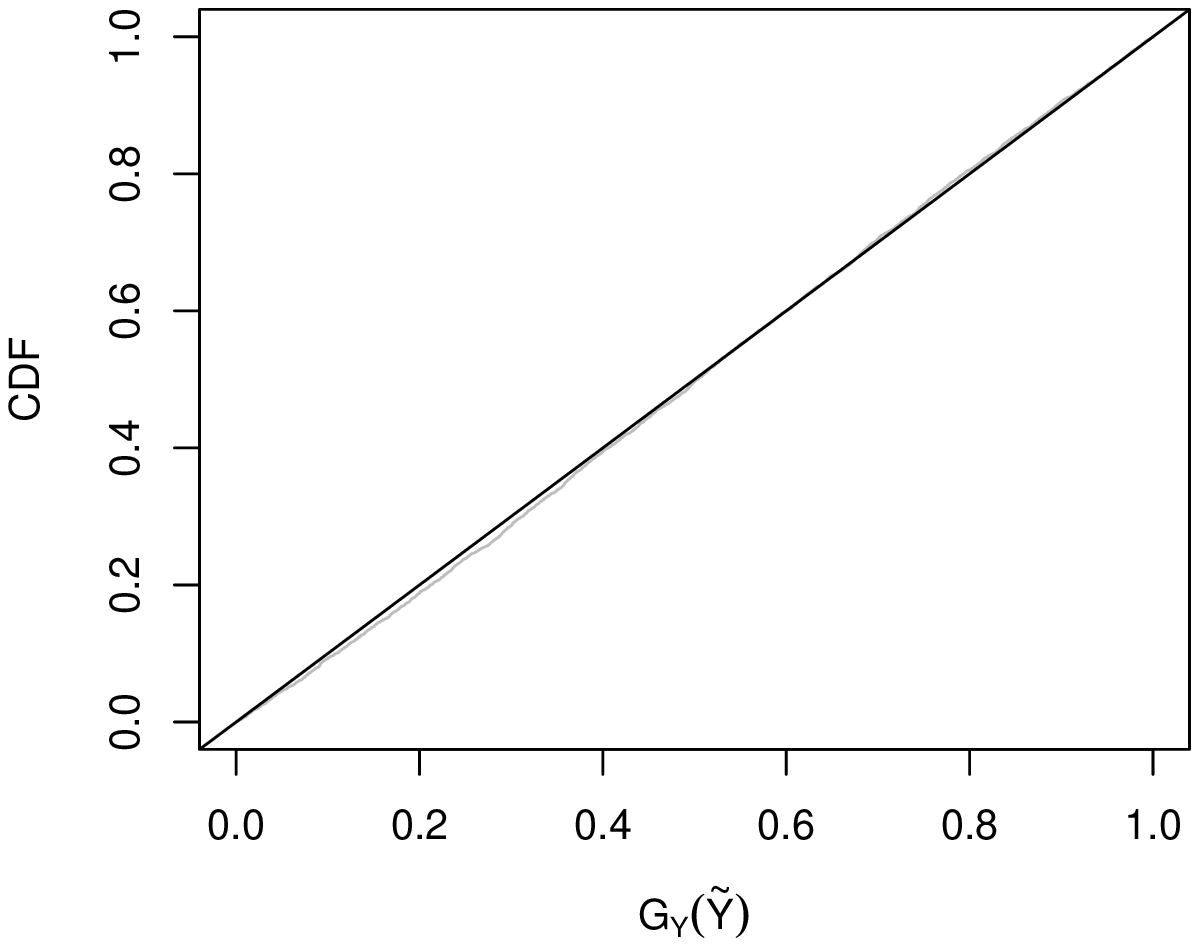}}}
\end{center}
\caption{Panel~(a): Plausibility function of $\tilde Y$ in the log-normal data example.  Panel~(b): Distribution function of $G_Y(\tilde Y)$ (gray) compared with that of $\unif(0,1)$ (black) based on Monte Carlo samples from the log-normal distribution with $\mu=2.173$ and $\sigma^2=2.3808$, the maximum likelihood estimates in the \citet{bhaumik.gibbons.2004} example.}
\label{fig:bhaumik}
\end{figure}

\subsection{Gamma models and a system breakdown application}
\label{SS:gamma}

Let $Y=(Y_1,\ldots,Y_n)$ be an iid sample from a gamma distribution with shape parameter $\theta_1 > 0$ and scale parameter $\theta_2 > 0$, both unknown.  Gamma models are often used in system reliability applications.  Following \citet[][Sec.~5.3]{imcond}, a conditional association for $(\theta_1,\theta_2)$ based on sufficient statistics is given by 
\[ T_1 = \theta_2 \Gamma_{n\theta_1}^{-1}(U_1) \quad \text{and} \quad T_2 = F_{\theta_1}^{-1}(U_2), \]
where $U=(U_1,U_2)$ are iid $\unif(0,1)$, $T_1 = \sum_{i=1}^n Y_i$, $T_2 = n^{-1}\sum_{i=1}^n \log Y_i - \log(n^{-1}T_1)$, $\Gamma_a$ is the ${\sf Gamma}(a,1)$ distribution function, and $F_b$ is a distribution function without a familiar form.  First, suppose the goal is to predict the next (independent) observation $\tilde Y = Y_{n+1}$, with the following association:
\begin{equation}
\label{eq:gamma.assoc1}
\tilde Y = \theta_2 \Gamma_{\theta_1}^{-1}(\tilde U), \quad \tilde U \sim \unif(0,1). 
\end{equation}
Specifically, we want to give a lower prediction limit for $\tilde Y$.  The general strategy is to solve for $\theta=(\theta_1,\theta_2)$ in the conditional association, and then plug this solution in for $\theta$ in the association for the new observation.  In particular, for a given $U=(U_1,U_2)$, write 
\begin{equation}
\label{eq:gamma.soln}
\theta(Y,U) = \bigl( \theta_1(Y,U), \theta_2(Y,U) \bigr) 
\end{equation}
for this solution; it depends on $Y$ only through $(T_1,T_2)$.  The solution exists and is unique, though there is no closed-form expression.  A proof of this claim, along with some details about computing the solution in \eqref{eq:gamma.soln}, are given in the Appendix.  Plugging \eqref{eq:gamma.soln} in to the association for $\tilde Y$ gives the marginal association
\begin{equation}
\label{eq:gamma.assoc2}
\tilde Y = \theta_2(Y,U) \Gamma_{n\theta_1(Y,U)}^{-1}(\tilde U). 
\end{equation}
This association is not of the separable form in Remark~\ref{re:separable}.  In any case, if $G_Y$ denotes the distribution function of the quantity on the right-hand side of \eqref{eq:gamma.assoc2}, then we can write $\tilde Y = G_Y^{-1}(W)$ for $W \sim \unif(0,1)$.  This completes the A-step for IM prediction.  Since we are interested in lower prediction limits, we compute the plausibility function in \eqref{eq:pl.ltside}.    

In some system reliability applications, like in \citet{hamada.etal.2004} and \citet{wang.hannig.iyer.2012}, interest may be in the largest among a collection of $m$ future observations.  In that case, we have an association that looks exactly like \eqref{eq:gamma.assoc2}, except that $\tilde Y = \max\{Y_{n+1},\ldots,Y_{n+m}\}$ is a maximum of $m$ future gamma observations and $\tilde U$ is the maximum of $m$ independent uniforms, independent of $U$.  This involves the same solution $\theta(Y,U)$ as before, so nothing changes except the distributions being used in the Monte Carlo simulation of the plausibility function.  

For illustration, consider the data $Y=(Y_1,\ldots,Y_n)$ on the first breakdown times of $n=20$ machines given in \citet{hamada.etal.2004}.  These data are reproduced in Table~\ref{table:hamada}.  At the 5\% significance level, the Kolmogorov--Smirnov test cannot reject the null hypothesis that these data are gamma, so the goal is to use the IM machinery described above to produce a lower prediction limit for $\tilde Y$, the maximum of $m=5$ future breakdown times.  Proceeding as described above, the plausibility function of $\tilde Y$ is given by $\pl_y(\tilde y) = G_y(\tilde y)$, which can be easily evaluated via Monte Carlo.  A plot of this plausibility function for $A=\{\tilde Y \leq \tilde y\}$, as a function of $\tilde y$, is given in Figure~\ref{fig:hamada}(a).  A one-sided 90\% plausibility interval is the set of all $\tilde y$ values such that $\pl_y(\tilde y) > 0.10$, and the lower bound in this case is 73.53 hours.  For comparison, our lower bound is bigger, i.e., more precise, than the Bayesian lower bound (71.8 hours) in \citet{hamada.etal.2004} and slightly smaller than the fiducial lower bound (74.36 hours) in \citet{wang.hannig.iyer.2012}.  The IM bound, per Remark~\ref{re:pred.region}, also has a clearer interpretation than the Bayesian and fiducial bounds. To assess the performance of the method in problems similar to this one, we simulate 2000 data sets based on the maximum likelihood estimates based on the failure time data.  A Monte Carlo estimate of the distribution function of $G_Y(\tilde Y)$ is shown in Figure~\ref{fig:hamada}(b).  This distribution function is sufficiently close to that of $\unif(0,1)$, so we can conclude our one-sided IM-based 90\% prediction interval has exact coverage.

\begin{table}
\begin{center}
\begin{tabular}{cccccccccc}
\hline
18 & 23 & 29 & 409 & 24 & 74 & 13 & 62 & 46 & 4 \\
57 & 19 & 47 & 13 & 19 & 208 & 119 & 209 & 10 & 188 \\
\hline
\end{tabular}
\end{center}
\caption{Machine first breakdown times, in hours.}
\label{table:hamada}
\end{table}

%y <- c(18, 23, 29, 409, 24, 74, 13, 62, 46, 4, 57, 19, 47, 13, 19, 208, 119, 209, 10, 188)

\begin{figure}
\begin{center}
\subfigure[Plausibility function of $\tilde Y$]{\scalebox{0.55}{\includegraphics{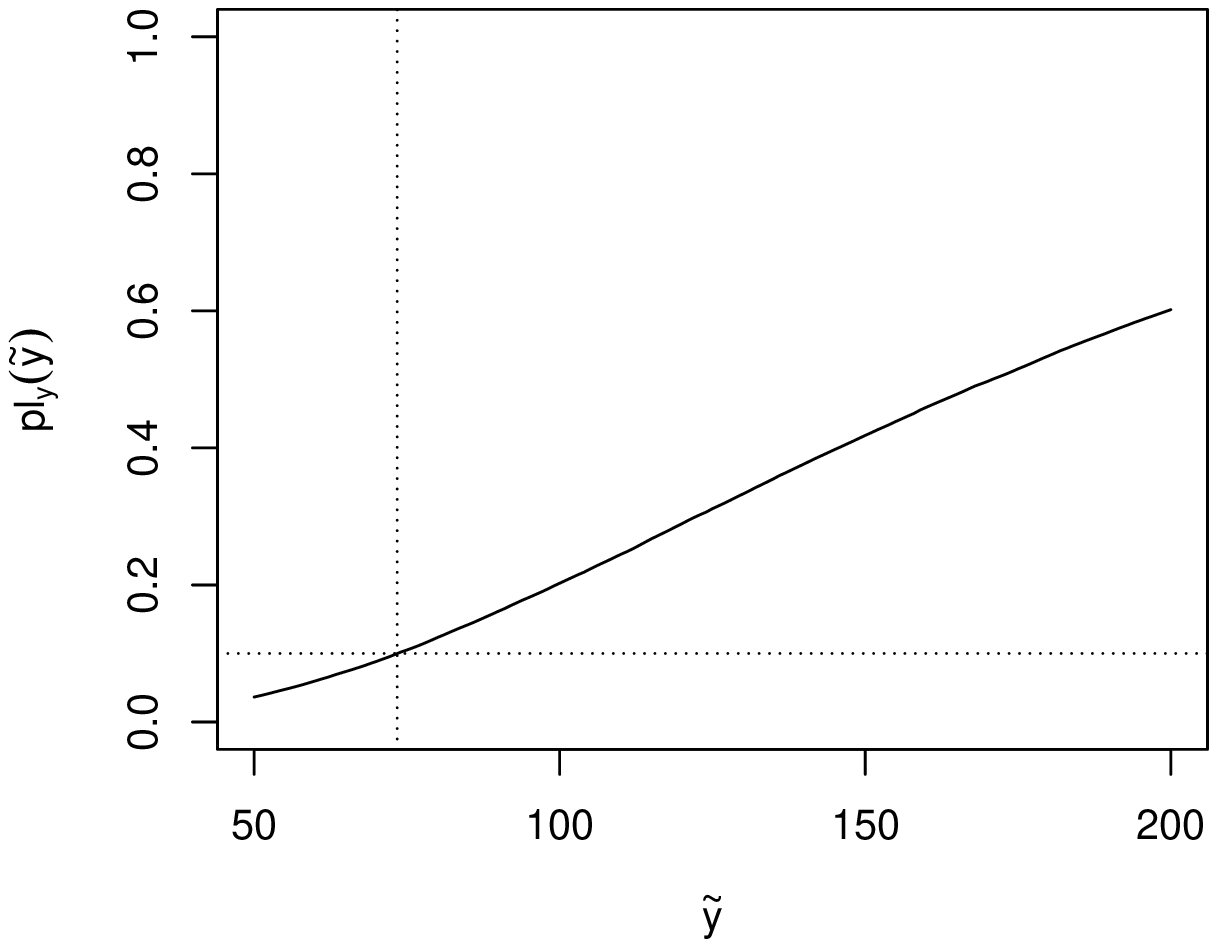}}}
\subfigure[Distribution function of $G_Y(\tilde Y)$]{\scalebox{0.55}{\includegraphics{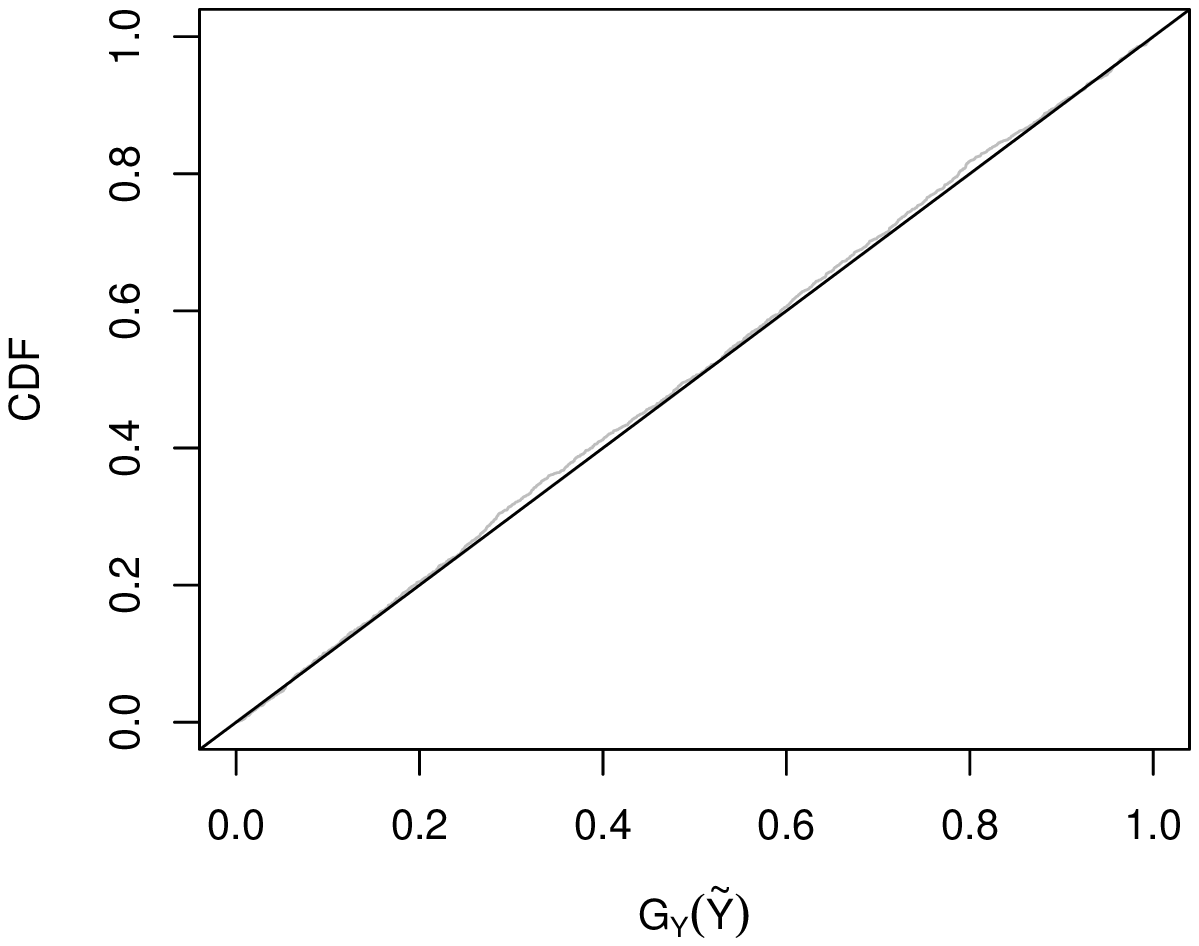}}}
\end{center}
\caption{Panel~(a): Plausibility function of $\tilde Y$ in the gamma data example.  Panel~(b): Distribution function of $G_Y(\tilde Y)$ (gray) compared with that of $\unif(0,1)$ (black) based on Monte Carlo samples from ${\sf Gamma}(\hat\theta_1,\hat\theta_2)$, where $\hat\theta_1=0.8763$ and $\hat\theta_2=90.91$ are the maximum likelihood estimates.}
\label{fig:hamada}
\end{figure}

For further illustration, we consider a simulation experiment, similar to that in \citet{wang.hannig.iyer.2012}, with three values of the sample size $n$ (10, 25, 125), four values of the shape parameter $\theta_1$ (0.5, 1, 5, 10), and two values of $m$ (1, 5); we keep the scale parameter $\theta_2$ fixed at 1.  For each combination, we evaluated the coverage probability of both the lower and upper 90\% prediction intervals based on 10,000 Monte Carlo samples.  In all cases, the coverage probability is within an acceptable range of the target 0.90.

\subsection{Binomial models and a disease count application}
\label{SS:binomial}

Let $Y \sim \bin(n,\theta)$ and   $ \tilde Y \sim \bin(m,\theta)$ be independent binomial random variables, where $n$ and $m$ are known.  The goal is to predict $\tilde Y$ based on observing $Y$.  There is considerable literature on this fundamental problem: see, e.g., \citet{wang2010} for frequentist prediction intervals and \citet{tuyl.etal.2009} for Bayesian prediction intervals.  The starting point for our IM-based analysis is the following joint association for $Y$ and $\tilde Y$, 
\[ F_{n,\theta}(Y-1) \leq 1-U < F_{n,\theta}(Y) \quad \text{and} \quad F_{m,\theta}(\tilde Y-1) \leq 1-\tilde U < F_{m,\theta}(\tilde Y), \]
where $U$ and $\tilde U$ are independent uniforms, and $F_{n,\theta}$ is the $\bin(n,\theta)$ distribution function.  To marginalize over $\theta$, we need a known identity linking the binomial and beta distribution functions, i.e., $F_{n,\theta}(y)  = 1- G_{y+1,n-y}(\theta)$, where $G_{a,b}$ is the $\be(a,b)$ distribution function.  Now rewrite the first expression in the joint association as a $\theta$-interval:
\begin{equation}
\label{eq:binom.assoc.0}
G_{Y, n-Y+1}^{-1}(U) \leq \theta < G_{Y+1, n-Y}^{-1}(U). 
\end{equation}
Next, rewrite the $\tilde Y$ association as 
\[ F_{m,\theta}^{-1}(1 - \tilde U) < \tilde Y < F_{m,\theta}^{-1}(1 - \tilde U). \]
\citet{klenke.mattner.2010} show that $F_{m,\theta}^{-1}(v)$ is an increasing function of $\theta$ for all $v$, so we can ``plug in'' the $Y$-dependent interval for $\theta$ in to this latter inequality, to get 
\begin{equation}
\label{eq:binom.assoc}
F_{m, \theta_1(Y,U)}^{-1}(1 - \tilde U) < \tilde Y < F_{m, \theta_2(Y,U)}^{-1}(1 - \tilde U), 
\end{equation}
where $\theta_1(Y,U)$ and $\theta_2(Y,U)$ are, respectively, the left and right endpoints of the interval in \eqref{eq:binom.assoc.0}.  This completes the A-step.  We are interested in two-sided prediction intervals here; see the P- and C-steps for singleton assertions in Section~\ref{SS:pc.steps}.  Note that this association is an interval, compared to the singletons in the previous examples.  This is a consequence of the discreteness of the binomial, not a limitation of the IM approach; but see below.  Some minor adjustments to the C-step in Section~\ref{SS:pc.steps} is needed to handle this discreteness.  Since we can easily get a Monte Carlo approximation for the distribution of the two endpoints, constructing a plausibility function for $\tilde Y$ is no problem.  

In medical applications, it may be desirable to obtain accurate prediction of the number of future cases of a disease based on the counts in previous years.  \citet[][Sec.~5]{wang2010} gives the following example.  The total number of newborn babies with permanent hearing loss is $Y=23$ out of $n=23061$ normal nursery births over a two-year period.  The goal is to predict $\tilde Y$, the number of newborns with hearing loss in the following year, based on $m=12694$ normal births.  For a two-sided, IM-based 90\% prediction interval for $\tilde Y$, we compute the 5th and 95th percentiles of the distribution of the lower and upper endpoints, respectively, in \eqref{eq:binom.assoc}.  The interval obtained is $(6,21)$, which contains the true $\tilde Y=20$ and is essentially the same as the intervals in \citet{wang2010}; see Remark~\ref{re:pred.region}.  

The plausibility function obtained based on the above construction is a bit conservative.  One possible adjustment, based on an idea presented by \citet{wang.hannig.iyer.2012} in the fiducial context, is to eliminate the interval association for $\tilde Y$ by first eliminating the interval association \eqref{eq:binom.assoc.0} for $\theta$ in terms of the limits $\theta_1(Y,U)$ and $\theta_2(Y,U)$.  The idea is to sample a value, $\hat\theta(Y,U)$, of $\theta$ at random from the interval $(\theta_1(Y,U), \theta_2(Y,U))$.  This results in a modified association for $\tilde Y$:
\begin{equation}
\label{eq:binom.modified}
\tilde Y = F_{m,\hat\theta(Y,U)}^{-1}(1-\tilde U). 
\end{equation}
The intuition is that the uncertainty due to the interval association has been replaced by the uncertainty from sampling.  Since the sampled point is ``less extreme'' than both of the endpoints, this modified association gives a more efficient plausibility function for prediction, which we now demonstrate.  Consider binomial samples of size $n=m=100$ over a range of $\theta$ values.  Here we compare the coverage probability and average lengths of 95\% upper prediction limits based on the modified IM, fiducial, and Jeffreys prior Bayes methods.  We simulated 2500 data sets, and each computation of the prediction interval (modified IM, fiducial, and Bayes) used 10,000 Monte Carlo samples.  In Figure~\ref{fig:binom}, we see that all three methods have coverage slightly above the nominal level over the entire range of $\theta$; this is to be expected, given the discreteness of the binomial model.  The modified IM intervals based on \eqref{eq:binom.modified} tend to have slightly higher coverage probability than the others, but with no perceptible difference in length. 

\begin{figure}
\begin{center}
\subfigure[Coverage probability]{\scalebox{0.55}{\includegraphics{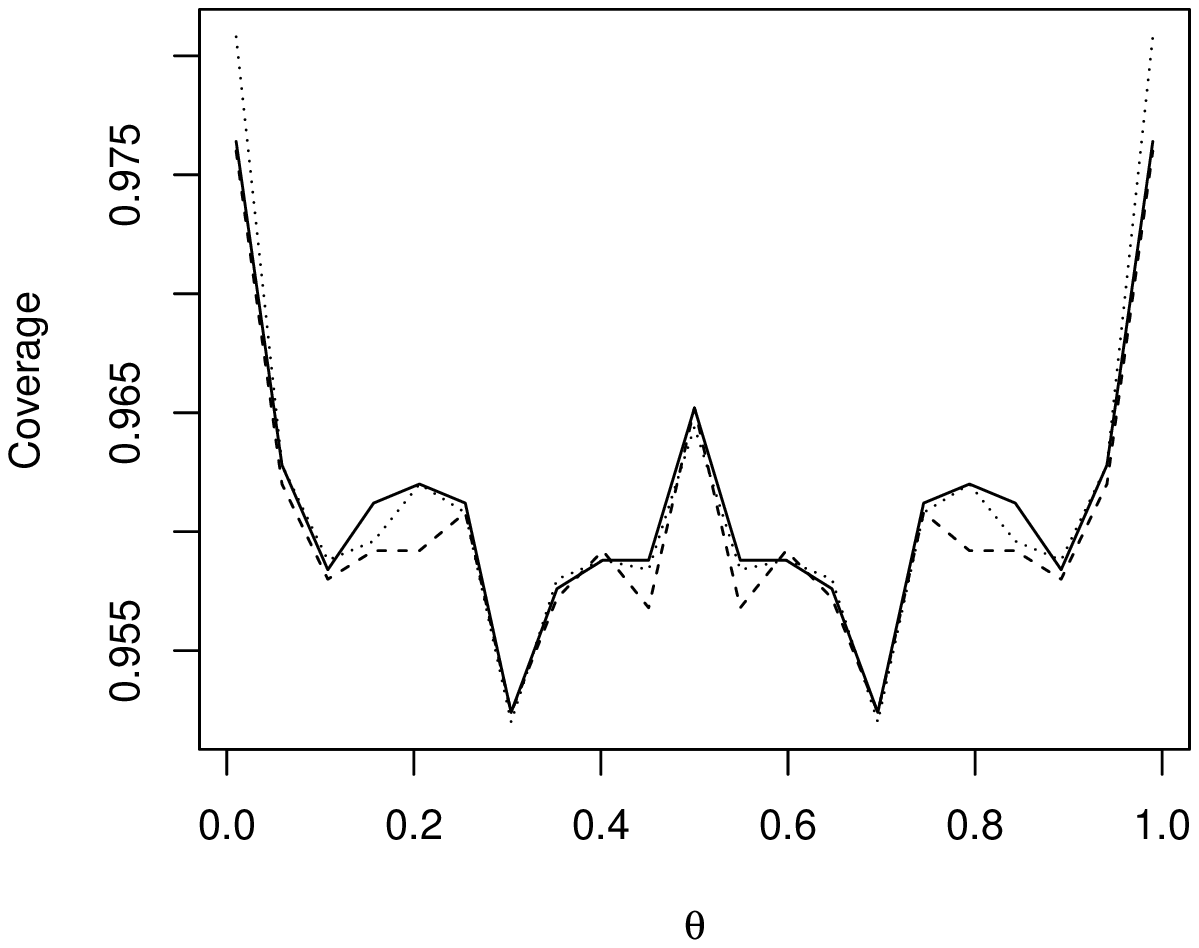}}}
\subfigure[Average length]{\scalebox{0.55}{\includegraphics{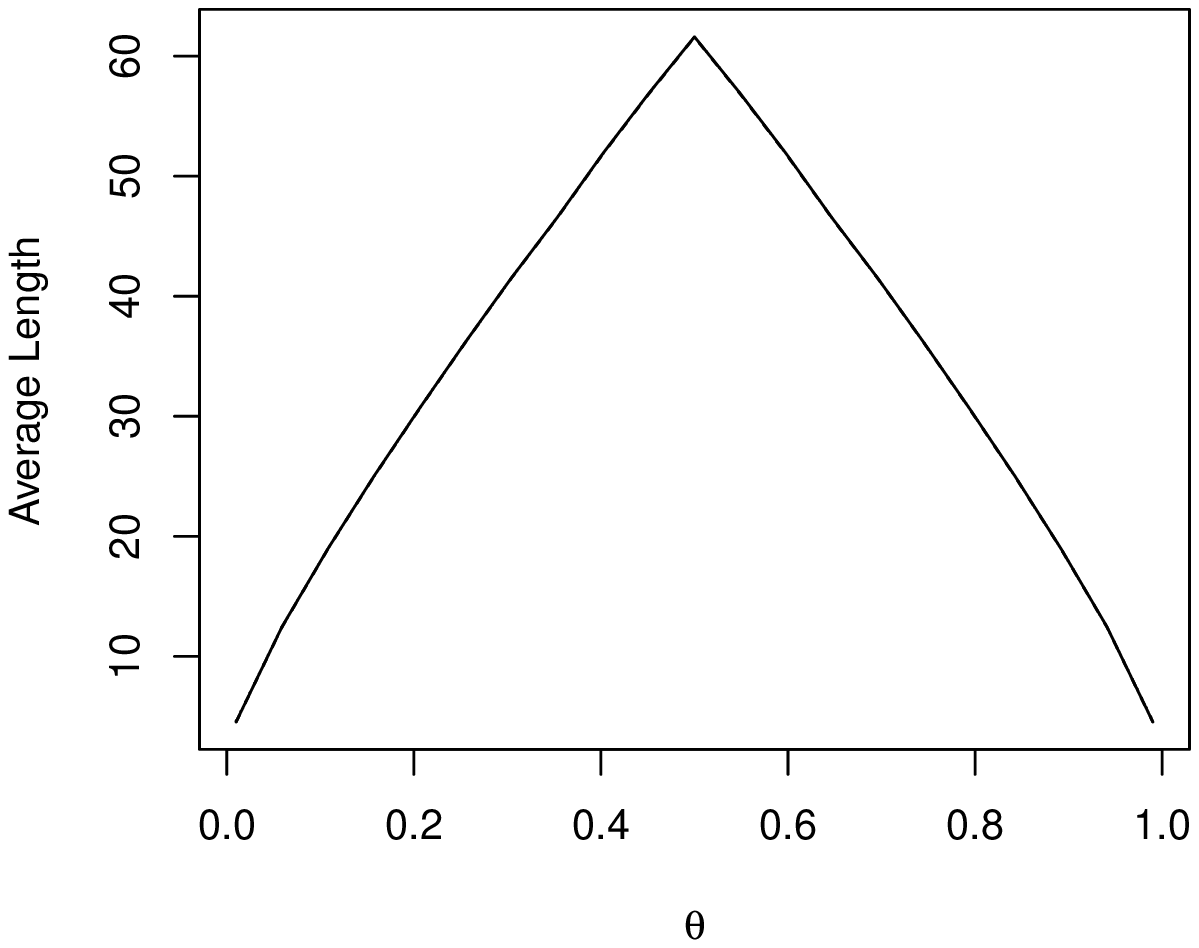}}}
\end{center}
\caption{Coverage probability and average length of the modified IM (solid), fiducial (dashed), and Jeffreys prior Bayes (dotted) upper 95\% prediction intervals, as functions of $\theta$; the three length curves in Panel~(b) are indistinguishable.  Here $n=m=100$ and estimates are based on 2500 simulated data sets.}
\label{fig:binom}
\end{figure}

%> binom.pred.sim(5000, 1000, 0.02, 10000, 1000)
%     nominal.cvg im.cvg fi.cvg   im.len   fi.len
%[1,]        0.95 0.9608  0.957 32.16729  31.7533

\section{Some further technical details}
\label{S:further.details}

\subsection{Asymptotic validity}
\label{SS:asymptotics}

Outside the separable class in Remark~\ref{re:separable}, or in cases where $\tilde Y$ is a non-linear function of several future observables, the theory of prediction validity is more challenging.  However, our examples in Section~\ref{S:examples} demonstrate that the uniformity assumption of Theorem~\ref{thm:pred.valid} holds at least approximately.  Here we give a theoretical argument to explain this phenomenon.

Write $Y^n=(Y_1,\ldots,Y_n)$ for data consisting of $n$ iid components.  Suppose that the solution $\theta(Y^n,V)$ converges in probability to $\theta$, as a function of $(Y^n,V)$; this usually is easy to arrange, see the examples  in Section~\ref{S:examples}.  If $\tilde Y$ has a continuous distribution, then, without loss of generality, we can write $G_{Y^n}(\tilde Y) = \tilde F_{\theta(Y^n,V)}(\tilde Y)$, where $\tilde F_\theta$ is the true distribution of $\tilde Y$.  Trivially, we have $(\theta(Y^n,V),\tilde Y) \to (\theta, \tilde Y)$ in distribution so, if $(\theta, \tilde y) \mapsto \tilde F_\theta(\tilde y)$ is continuous, then the continuous mapping theorem implies that 
\[ G_{Y^n}(\tilde Y) = \tilde F_{\theta(Y^n,V)}(\tilde Y) \to \tilde F_\theta(\tilde Y) \sim \unif(0,1) \quad \text{in distribution}. \]
Therefore, we can generally be sure that the distribution of $G_{Y^n}(\tilde Y)$ will be approximately $\unif(0,1)$ when the sample size $n$ is large.  This argument holds even if $\tilde Y$ is some scalar function of several future observations. 

In addition to providing an asymptotic validity result, the argument above is also relevant to prediction accuracy.  That is, we have demonstrated that the IM ``predictive distribution'' for $\tilde Y$,  which mixes $\tilde F_{\theta(Y^n,V)}$ over the distribution of $V$, converges to $\tilde F_\theta$, the true distribution of $\tilde Y$.  A precise prediction accuracy result requires computing a measure of the distance/divergence of the IM predictive from the truth.  We expect that results comparable to those in \citet{lawless.fredette.2005} can be derived, but we leave this as a question to be considered in future work.

\subsection{Case of dependent $\tau(U)$ and $\eta(U)$}
\label{SS:conditional}

Recall that, in Section~\ref{SS:setup}, it was assumed that the original association, $Y=a(\theta,U)$, could be decomposed as $T(Y) = b(\theta, \tau(U))$ and $H(Y) = \eta(U)$, and, furthermore, that $\tau(U)$ and $\eta(U)$ are independent.  Of concern here is the case where $\tau(U)$ and $\eta(U)$ are not independent, which would arise, say, in models that are not regular exponential families.  The IM framework is equipped to handle this, but the details are more complicated.  Here, we describe the three-step construction of an IM for prediction in the case where the model is a location shift of a Student-t distribution.  

Let $Y=(Y_1,\ldots,Y_n)^\top$ be iid, with $Y_i=\theta + U_i$ and $U_i \sim \stt_d$, i.e., $\theta$ is a location parameter and the error has a Student-t distribution with known degrees of freedom $d$.  The case with an additional unknown scale parameter can be handled similarly.  Let $T(Y)$ be the maximum likelihood estimator for $\theta$, and let $H(Y)=Y - T(Y) 1_n$ be the vector of residuals, where $1_n$ is a (column) $n$-vector of unity.  Since $T$ is equivariant in this example, we have 
\[ T(Y) = \theta + T(U) \quad \text{and} \quad H(Y) = H(U), \]
so that $\tau=T$ and $\eta=H$.  However, $\tau(U)$ and $\eta(U)$ are not independent here.  As suggested in \citet{imcond} and in Section~\ref{SS:setup} above, we want to consider the conditional distribution of $\tau(U)$, given that $\eta(U)$ equals the observed value of $H(Y)$.    Let $H_0$ denote the observed value of $H(Y)$.  Then \citet{imcond} give a formula for the conditional distribution of $\tau(U)$, given $\eta(U)=H_0$.  Write $\prob_{V|H_0}$ for this conditional distribution of $V=\tau(U)$, so that 
\[ T(Y) = \theta + V, \quad V \sim \prob_{V|H_0}. \]
This can be solved for $\theta$, and a marginal association connecting the observed $Y$, the next $\tilde Y$ to be predicted, and the pair $(V,\tilde U)$ of auxiliary variables is of the form  
\[ \tilde Y = T(Y) + \tilde U - V, \quad \text{where} \quad V \sim \prob_{V|H_0}, \; \tilde U \sim \stt_d. \]
Computation with the conditional distributions is more cumbersome, but our claim that the dependent case is conceptually no different than the independent case should now be clear.  Moreover, using the law of iterated expectation, it can be shown that the conditioning does not affect validity result in Theorem~\ref{thm:pred.valid}.

\subsection{Multivariate prediction}
\label{SS:multi}

The focus of this paper was on the case of predicting a scalar $\tilde Y$, which is possibly a scalar-valued function of several future observables.  However, in some cases there could be interest in simultaneous prediction of several future observables.  One example is in regression, where interest may be in predicting the response value corresponding to several values of the predictor variables simultaneously.  While the general IM framework is well-equipped to handle the multivariate case, our developments here have employed a few scalar-specific steps.  Our goal in this section is simply to highlight those scalar-specific steps, defining a roadmap to extend the present developments to the multivariate case.  

First, note that, up to the simplified association formula in \eqref{eq:pred.marg}, there is nothing in the developments in Section~\ref{SS:setup} specific to the scalar $\tilde Y$ case.  There, we introduced a distribution $G_Y$ and a probability integral transform, which is only valid in the scalar case.  However, \eqref{eq:pred.marg0} is well-defined for vector $\tilde Y$, and would conclude the A-step for multivariate prediction.  From here, the P-step proceeds by introducing a predictive random set for the pair $(V,\tilde U)$.  Validity, as usual, would not be a major obstacle, but an efficient choice of predictive random set would be problem specific.  For a scalar auxiliary variable, the only reasonable choice of predictive random set is an interval, but in the multivariate case, there are lots of ``reasonable'' shapes, and the choice among them makes a difference in terms of the corresponding IM's efficiency.  Work on the construction of efficient predictive random sets, in general, is ongoing, and results to be obtained will have immediate application to the multivariate prediction problem.  

Second, observe that the discussion of one-sided assertions in Section~\ref{SS:pc.steps} is not appropriate in the multivariate setting, where there is no proper ordering.  The two most natural assertions would be the singletons and assertions defined via level sets of some scalar-valued function of $\tilde Y$, e.g., balls $\{\tilde y: \|\tilde y\| \leq r\}$ for some fixed radius $r > 0$.  The latter case reduces to the scalar prediction problem covered in this paper.  For the singleton assertion case, nothing in the present development needs to change, except that the predictive random set, in general, must be specified for the pair $(V,\tilde U)$ directly.  If this predictive random set is valid, then the conclusion of Theorem~\ref{thm:pred.valid} holds.  Again, the challenge is that the shape of the predictive random set is directly related to the shape and efficiency of, say, the IM prediction regions for $\tilde Y$.  So, more work on constructing good/optimal predictive random sets for multivariate auxiliary variables is needed.

\section{Conclusion}
\label{S:discuss}

In this paper, we have proposed a method for prediction of future observables based on the recently developed IM framework.  The key to the IM approach in general is the association of data and parameters with unobservable auxiliary variables, and the use of random sets on the auxiliary variable space to construct belief and plausibility functions on the parameter space.  In the context of prediction of future observables, all the model parameters are nuisance, and an extreme form of the marginalization technique described in \citet{immarg} is required, which allows us to reduce the dimension of the auxiliary variables, increasing efficiency.  We give conditions which guarantee that the IM for prediction is valid, and we argue that this notion of IM validity translates to frequentist coverage guarantees for our plausibility intervals for future observables.  A sequence of practical examples demonstrates the quality performance of the proposed method, along with its generality and overall simplicity.  

The methodology described here covers both discrete and dependent-data problems.  However, these problems present unique challenges.  For example, in the binomial example in Section~\ref{SS:binomial}, our standard IM approach was valid but conservative.  A modified and more efficient IM was proposed, and its validity was confirmed numerically, but a theoretical basis for this modification is required.  For dependent data problems, marginalization to reduce the dimension of auxiliary variables as described here is possible, but the details would be more challenging.  Moreover, as discussed in Section~\ref{SS:multi}, additional work is needed to properly extend the developments in the present paper to the multivariate prediction problem.  These are all topics for future research.  

To conclude, recall the take-away message from Section~\ref{S:intro}.  The IM approach provides a general and easy-to-implement method for constructing valid prior-free probabilistic summaries of the information in observed data relevant for inference or prediction.  The fact that these summaries can be converted to frequentist procedures with fixed-$n$ performance guarantees and comparable efficiency compared to existing methods is an added bonus.  We expect further developments and applications of IMs in years to come.

\section*{Acknowledgments}

The authors thank Chuanhai Liu for helpful comments, Nicholas Karonis for cluster computer access, and John Winans for computational assistance.  This research is partially supported by the U.S.~National Science Foundation, DMS--1208833.

\appendix

\section{Technical details for Section~\ref{SS:gamma}}
\label{S:appendix.gamma}

\subsection{Existence and uniqueness of the solution \eqref{eq:gamma.soln}}
\label{SS:appendix.existence}

Here the issue is existence and uniqueness of the solution $\theta(T,U) = (\theta_1(T,U), \theta_2(T,U))$ in the gamma problem in Section~\ref{SS:gamma}.  The only non-trivial part is the solution $\theta_1$ of equation $F_{\theta_1}(t_2)=u_2$, involving only $(t_2,u_2)$.  The challenge is that $F_{\theta_1}$ is a non-standard distribution.  \citet{glaser1976}, in his notation, considers the random variable 
\[ U^\star = \Bigl\{ \frac{(\prod_{i=1}^n Y_i)^{1/n}}{\frac1n \sum_{i=1}^n Y_i} \Bigr\}^n, \]
the $n$-th power of the ratio of geometric and arithmetic means of an iid $\gam(\theta_1,1)$ sample.  Then $T_2=n^{-1}\log(U^\star)$, i.e., our $T_2$ is a monotone increasing function of Glaser's $U^\star$.  A consequence of Glaser's Corollary~2.2 is that $U^\star$ is stochastically strictly increasing in $\theta_1$, which implies that $F_{\theta_1}(t_2)$ is a decreasing function of $\theta_1$ for all $t_2$.  Therefore, if a solution exists for $\theta_1$ in \eqref{eq:gamma.soln}, it must be unique by monotonicity.  

Turning to the existence of a solution for $\theta_1$, we need to show that, for any $t_2$, $F_{\theta_1}(t_2)$ spans all of the interval $(0,1)$ for $u_2$ as $\theta_1$ varies.  By monotonicity, it suffices to consider the limits $\theta_1 \to \{0,\infty\}$.  \citet{jensen1986} considers the random variable $W=-1/T_2$ and shows, in his Equation~(9), that $\theta_1/W$ has a limiting distribution as $\theta_1 \to \{0,\infty\}$, which implies the same for $\theta_1 T_2$.  It is now clear that $F_{\theta_1}(t_2)$ converges to 1 and 0 as $\theta_1$ converges to $0$ and $\infty$, respectively, for all $t_2$.  Therefore, a solution for $\theta_1$ in \eqref{eq:gamma.soln} exists for all $(t_2,u_2)$ pairs, as was to be shown.

\subsection{Computing the solution \eqref{eq:gamma.soln}}
\label{SS:appendix.computing}

Here we consider computing the solution $\theta(Y,U)$ in \eqref{eq:gamma.soln}.  The only challenging part is solving for $\theta_1$, so we shall focus on this.  Suppose $t_2$ and $u_2$ are given, and define a function $r(x) = F_x(t_2)-u_2$; the goal is to find the root for $r$.  One can evaluate $r(x)$ by simulating $\gam(x,1)$ variables, giving a Monte Carlo approximation of $F_x(t_2)$, and the root can then be found with any standard method, e.g., bisection.  However, this can be fairly expensive computationally.  A more efficient alternative approach is available based on large-sample theory.  By Theorem~5.2 in \citet{glaser1976} and the delta theorem, if $n$ is large, then $F_x$ can be well approximated by a normal distribution function with mean $\psi(x) - \log(x)$ and variance $n^{-1}\{\psi'(x) - 1/x\}$, where $\psi$ and $\psi'$ are the digamma and trigamma functions, respectively.  With this normal approximation, it is easy to evaluate $r(x)$ and find the root numerically.  Though this is based on a large-sample approximation, in our experience, there is no significant loss of accuracy, even for small $n$.  

The normal approximation discussed above is simply a tool to find the solution $\theta(Y,U)$.  It also provides some intuition related to the asymptotic argument in Section~5.1.  When $n$ is large, the variance in the normal approximation is $O(n^{-1})$, so the distribution function $F_x$ will have a steep slope in the neighborhood of the solution to the equation $t_2=\psi(x)-\log(x)$ and, therefore, the root for $r(x)$ will be in that same neighborhood, no matter the value of $u_2$.  The solution to the equation $t_2=\psi(x)-\log(x)$ is the maximum likelihood estimator of $\theta_1$ \citep[e.g.,][]{fraser.reid.wong.1997}, which is consistent.  Therefore, when $n$ is large, \eqref{eq:gamma.assoc2} and \eqref{eq:gamma.assoc1} are essentially the same, so the approximate validity of the corresponding prediction plausibility function is clear.  

One last modification that we found to be helpful was to modify that normal approximation discussed above by replacing the normal distribution function with a gamma.  That is, find solutions for the mean and variance of the normal approximation as before, but then use a gamma distribution function with mean and variance matching those obtained for the normal.  See the R code available at the first author's website.

\bibliographystyle{apalike}
\bibliography{/Users/rgmartin/Dropbox/Research/mybib}

\end{document}